\documentclass[a4paper,twoside,11pt,final]{article}
\usepackage[utf8]{inputenc}
\usepackage{amssymb, amsthm}
\usepackage{amsmath}
\usepackage{graphicx}
\usepackage{cite}
\newtheorem{theorem}{Theorem}
\newtheorem{lemma}{Lemma}

\newtheorem{remark}{Remark}
\usepackage[toc,page]{appendix}
\usepackage{times}
\usepackage{caption}
\theoremstyle{remark}
\begin{document}
\title{A Probabilistic Baby-Step Giant-Step Algorithm} 
\author{Prabhat Kushwaha\\
prabhatkk@students.iiserpune.ac.in\\
Ayan Mahalanobis\\
ayan.mahalanobis@gmail.com\\
\small{IISER Pune, Dr.~Homi Bhabha Road, Pune 411008, INDIA}}
%\address{Dr.~Homi Bhabha Road, Pashan, Pune 411008 INDIA}
\date{}
\bibliographystyle{plain}
\maketitle
\begin{abstract}
In this paper, a new algorithm to solve the discrete logarithm problem is presented which is similar to the usual baby-step 
giant-step algorithm.  
Our algorithm exploits the order of the discrete logarithm in the multiplicative group of a finite field. 
Using randomization with parallelized collision search, our algorithm indicates some weakness in NIST 
curves over prime fields which are considered to be the most conservative and safest curves among all NIST curves. 
\end{abstract}

\textbf{Keywords}: Discrete logarithm problem, baby-step giant-step algorithm, NIST curves over prime fields, parallelized 
collision search.
 
\section{Introduction}
It is well-known that computationally hard number theoretic problems are used as primitives in public-key cryptography. 
On that basis, public-key cryptography can be divided into two categories. One uses the hardness of factorizing large
integer as the building blocks to construct public-key protocols and the other is based on the computational difficulty
of solving the discrete logarithm problem. In this paper, we are interested in the latter.

Let $G$ be a cyclic group of prime order $p$ and generated by $P$ which is written additive. Given an element $Q=xP \in
\mathbb{G}$, the \emph{discrete logarithm problem}(\textbf{DLP}) in $G$ is to compute the integer $x$. This integer $x$ 
is called the discrete logarithm of $Q$ with the base $P$. There are generic algorithms such as the baby-step
giant-step algorithm~\cite{sil} which solves DLP in any group $G$. 

In this paper, we develop and study a different version of the baby-step giant-step algorithm. The novelty of our approach 
comes from the \emph{implicit representation} using $F_p^\times$ as \emph{auxiliary group}. Our approach leads to a way to
reduce the discrete logarithm problem to a problem in $\mathbb{F}_p^\times$. The advantage of this approach is, 
$\mathbb{F}_p^\times$ has many subgroups and one can exploit the rich and well understood subgroup structure of
$\mathbb{F}_p^\times$. 

In Theorem 1 we develop an algorithm that solves the discrete logarithm problem using implicit representation. Two things 
come out of this theorem:
\begin{description}
\item[A] If the secret key $x$ belongs to some small subgroup of $\mathbb{F}_p^\times$, there can be an efficient attack 
on the DLP.
\item[B] If somehow it is known to an attacker that the secret key is in some subgroup $H$ of $\mathbb{F}_p^\times$, that
information can be used to develop a better attack. 
\end{description}

The question remains, what happens if no information about the secret $x$ is known. We develop a probabilistic algorithm 
(Theorem 2) to expand our attack. To understand this probabilistic attack properly, we study it on the curve P-256. This
is an NIST recommended curve over a prime field and is considered secure. 
Our study, which we present in details in Section 3 indicates some weakness in this curve.

\section{Main Work}
Let $G$ be a cyclic group of prime order $p$ and generated by $P$ which is written additive. For $y\in\mathbb{F}_p$, 
$yP \in G$ is called the implicit representation of $y\in \mathbb{F}_p$(with respect to $G$ and $P$).
The following lemma comes from the idea of implicit representation of a finite field, proposed by Maurer and Wolf
~\cite{maurer1999relationship}. 
\begin{lemma}
Let $a,b$ be any two integers. Then $a=b\pmod p$ if and only if $aP=bP$ in $G$.
\end{lemma}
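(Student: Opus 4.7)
The plan is to prove both directions directly from the fact that $P$ is a generator of the cyclic group $G$ of prime order $p$, which forces the order of $P$ to be exactly $p$.

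For the forward implication, I would assume $a \equiv b \pmod{p}$, write $a - b = kp$ for some integer $k$, and compute $aP - bP = (a-b)P = (kp)P = k(pP)$. Since $P$ has order $p$, the element $pP$ is the identity of $G$, so $aP = bP$. The only subtlety is handling negative multiples, but in additive notation $(-n)P = -(nP)$ and the computation goes through unchanged.

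For the converse, assuming $aP = bP$ gives $(a-b)P = 0_G$. The key fact is that the set $\{n \in \mathbb{Z} : nP = 0_G\}$ is precisely the ideal $p\mathbb{Z}$, because the order of $P$ in $G$ equals $p$. Hence $p \mid (a-b)$, i.e., $a \equiv b \pmod p$.

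The proof is essentially a one-line appeal to the definition of the order of an element, and I do not anticipate any genuine obstacle. The only care needed is to note explicitly why $P$ has order exactly $p$: since $G$ has prime order $p$ and $P$ generates $G$, the cyclic subgroup $\langle P\rangle = G$ has $p$ elements, so the order of $P$ is $p$. Once this is stated, both directions are immediate.
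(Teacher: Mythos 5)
Your proposal is correct and follows essentially the same route as the paper: write $a-b$ as a multiple of $p$ and use $pP=0_G$ for the forward direction, and use $(a-b)P=0_G$ together with the fact that $P$ has order exactly $p$ for the converse. The extra care you take in justifying that the order of $P$ is $p$ (and that the annihilator of $P$ is $p\mathbb{Z}$) is a minor elaboration of the same argument, not a different approach.
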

\begin{proof}
Assume that $a=b\pmod p$, then $a=tp+b$ for some integer $t$. Then $aP=tpP+bP=bP$.
Conversely, assume that $aP=bP$, then $(a-b)P=0$ in $G$ and this means $p|(a-b)$ which implies that $a=b\pmod p$.
\end{proof}

The usefulness of this lemma is to be able to decide on the equality in $\mathbb{F}_p^\times$ by looking at the equality in
$G$. The following algorithm to solve the discrete logarithm problem uses the order of the discrete logarithm in the multiplicative
group of a finite field. This algorithm is different from the baby-step giant-step~\cite{sil} as it uses the implicit
representation with multiplicative group of a finite field as auxiliary group. 

\begin{theorem}
Let $G$ be an additive cyclic group generated by $P$ and order of $P$ is a prime $p$. Let $Q = xP$ be another given 
element of $G$($x$ is unknown). For a given divisor $d$ of $p-1$, let $H$ be the unique subgroup of 
$\mathbb{F}_p^\times$ of order $d$. Then, one can decide whether or not $x$ belongs to $H$ in $O(\sqrt{d})$ steps.
Furthermore, if $x$ belongs to $H$, the same algorithm will also find the discrete logarithm $x$ in 
$O(\sqrt{d})$ steps where each step is an exponentiation in the group $G$.
\end{theorem}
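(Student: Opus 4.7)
The plan is to mimic the standard baby-step giant-step strategy, but with the auxiliary group being the subgroup $H\subseteq\mathbb{F}_p^\times$ rather than the usual interval $\{0,1,\dots,p-1\}$. Fix a generator $g$ of $H$ (for instance, pick any generator $h$ of $\mathbb{F}_p^\times$ and take $g=h^{(p-1)/d}$). Set $m=\lceil\sqrt{d}\,\rceil$. The key observation is that $x\in H$ if and only if $x\equiv g^{k}\pmod{p}$ for some $k$ with $0\le k<d$, and any such $k$ admits a unique decomposition $k=im+j$ with $0\le i,j<m$. By Lemma~1 this is equivalent to the equation $g^{im+j}P=Q$ in $G$, which in turn is equivalent to
\[
g^{j}P \;=\; g^{-im}\,Q \qquad \text{in } G.
\]

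With this reformulation in hand, I would state the algorithm as follows. First, compute and store the baby-step table $T=\{(g^{j}P,\,j):0\le j<m\}$; the entries are generated iteratively by $g^{j+1}P = g\cdot(g^{j}P)$, one scalar multiplication in $G$ per entry. Second, precompute $c\equiv g^{-m}\pmod{p}$ and then iterate the giant steps $R_i = c\cdot R_{i-1}$ starting from $R_0=Q$, so that $R_i = g^{-im}Q$. For each $i=0,1,\dots,m-1$, look up $R_i$ in $T$; if some match $R_i=g^{j}P$ is found, output $x\equiv g^{im+j}\pmod{p}$ and declare $x\in H$, otherwise declare $x\notin H$.

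Correctness reduces to two observations. If $x\in H$ then $x\equiv g^{im+j}\pmod{p}$ for a unique pair $(i,j)$ in the range, so Lemma~1 guarantees that $g^{j}P = g^{-im}Q$ in $G$ and the corresponding entry will be found. Conversely, any collision $g^{j}P = g^{-im}Q$ with $0\le i,j<m$ forces $g^{im+j}P = Q$, whence by Lemma~1 one concludes $x\equiv g^{im+j}\pmod p\in H$. Thus the algorithm answers the decision question correctly and, in the positive case, recovers $x$.

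For the complexity, there are $m$ baby steps, $m$ giant steps, one preliminary computation of $c$, and hash-table insertions and lookups of cost $O(1)$ per operation; in total $O(\sqrt d)$ scalar multiplications in $G$, which the theorem calls exponentiations. The only genuine subtlety is ensuring that one knows (or can cheaply produce) a generator $g$ of the unique order-$d$ subgroup $H$; this is standard, and its cost $O(\log p)$ is dominated by the $O(\sqrt{d})$ exponentiations in $G$. Hence there is no serious obstacle, and the argument is essentially a careful reinterpretation of BSGS through the implicit-representation lens provided by Lemma~1.
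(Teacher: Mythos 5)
Your proposal is correct and follows essentially the same route as the paper: both run baby-step giant-step on the exponent of a generator of $H$, transported into $G$ via Lemma~1. The only cosmetic difference is the decomposition of the exponent — you use $k=im+j$ and multiply $Q$ by $g^{-m}\bmod p$ in the giant steps, while the paper uses $k=an-b$ to avoid the (cheap) modular inversion — which does not change the argument or the $O(\sqrt{d})$ cost.
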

\begin{proof}
Since $H$ is a subgroup of the cyclic group $\mathbb{F}_p^\times$, we assume that it is generated by some element $\zeta$. If 
the generator of $H$ is not given to us, we can compute it using a generator of $\mathbb{F}^{\times}$ and $d$. 
The proof of whether $x$ belongs to $H$ or not follows from the well-known baby-step giant-step algorithm
~\cite[Proposition 2.22]{sil} to compute the discrete logarithm.

Let $n$ be the smallest integer greater than $\sqrt{d}$. Then $x\in \mathbb{}H$ if and only if there exists an integer $k$ with
$0 \leq k \leq d$ such that $x=\zeta^k \pmod p$. Note that any integer $k$ between $0$ and $d$ can be written as $k = an -b$ for
unique integers $a, b$ with $0 \leq a, b \leq n$, by division algorithm. Therefore,  $x\in \mathbb{}H$ if and only if there exist
two integers $a, b$ with $0\leq a,b \leq n$ such that $x=\zeta^{an -b}\pmod p$, or equivalently $\zeta^b x={\zeta^n}^a \pmod p$.
Using the lemma above, we see that  $x\in \mathbb{}H$ if and only if there exist two integers $a, b$ with $0\leq a,b \leq n$ 
such that $\zeta^b xP={\zeta^n}^a P$, equivalently  $\zeta^b Q=(\zeta^n)^a P$ as $Q = xP$.  

Now, we create a list $\left\{\zeta^bQ:\;0\leq b\leq n\right \}$. Then we generate elements of the form ${(\zeta^n)^a}P$ for 
each integer $a$ in $[0,n]$ and try to find a collision with the earlier list. When there is a collision, i.e., 
$\zeta^b Q={(\zeta^n)^a}P$ for some $0\leq a,b \leq n$, it means that $x \in H$. Otherwise, $x \notin H$.

Moreover, if $x \in H$ then $\zeta^b Q={(\zeta^n)^a}P$ for some $0\leq a,b \leq n$. So, we use the integers $a$ and 
$b$ to compute $\zeta^{an -b} \pmod p$ which is nothing but the discrete logarithm $x$. 
%Since computation of $rP$ from $P$ for some integer $r<p$ takes at most $O(log p)$ group operations and
Since the two lists require computation of at most $2n$ exponentiations, the worst case time complexity of the algorithm to 
check whether or not $x \in H$, as well as to compute $x$(if $x\in H$) would be $O(n) \approx 
O(\sqrt{d})$ steps. This completes the proof.
\end{proof}

\begin{remark}
Even though the above algorithm is generic in nature, it does have a practical significance. Our algorithm applies on
all the five prime order NIST curves~\cite{nist} viz. P-192, P-224, P-256, P-384, P-521. Although the probability of a randomly
chosen secret key $x$ being inside a particular subgroup of $\mathbb{F}_p^\times$ can be very small, however, it is advisable
to check, using our algorithm for each curve, if the secret key $x$ belongs to any of two (large enough)subgroups whose orders
are mentioned in the appendix A. If it does, we discard the secret key.   
\end{remark}
Suppose that $p-1$ has large enough(but a lot smaller than $p-1$) divisor $d$ and $H$ is the unique subgroup of 
$\mathbb{F}_p^\times$ of order $d$. A drawback of the deterministic algorithm given in Theorem 1 is that it might fail to solve 
DLP because the probability of $x$ belonging to $H$ is very small. One way to increase the probability is to increase 
the size of $d$, if such $d$ exists. Clearly, this is not a desirable solution because the computational cost depends on the size 
of the subgroup. 

The above algorithm can be parallelized which helps us overcome this obstacle by increasing 
the probability. We have \textit{randomized} the above algorithm where the random inputs will be 
running on parallel processes or threads. This parallelization along with collision algorithm (based on birthday paradox)~\cite
[Theorem 5.38]{sil} yields a randomized probabilistic algorithm which can solve DLP with a given probability. 

\noindent\textbf{Collision Theorem}: An urn contains $N$ balls, of which $n$ balls are red and $N-n$ are blue. One randomly 
selects a ball from the urn, replaces it in the urn, randomly selects a second ball, replaces it, and so on. He does this until
he has looked at a total number of $m$ balls. Then, the probability that he selects at least one red ball is
 $$\text{Pr}(at \; least \; one \;  red\; ball) = 1 - \left(1-\frac{n}{N}\right)^m \geq 1- e^{\frac{-mn}{N}}.$$

\begin{theorem}
Let $G$ be an additive cyclic group generated by $P$ and the order of $P$ is a prime $p$. Let $Q = xP$ be another 
given element of $G$($x$ is unknown). For a given divisor $d$ of $p-1$, let $H$ be the unique subgroup of 
$\mathbb{F}_p^\times$ of order $d$. Then, $x$ can be computed in $O(\sqrt{d})$ steps with probability at least 
$1-e^{\left(\frac{-dm}{p-1}\right)}$ if one has access to $m$ parallel threads.
\end{theorem}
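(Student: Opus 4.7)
The plan is to randomize the deterministic algorithm of Theorem 1 by a multiplicative scaling of the secret, and then apply the stated Collision Theorem to the $m$ independent parallel threads. The key observation is that for any fixed nonzero $x$, the map $r \mapsto rx$ is a bijection of $\mathbb{F}_p^\times$, so if $r$ is drawn uniformly from $\{1,\dots,p-1\}$ then $rx$ is also uniform in $\mathbb{F}_p^\times$. Hence $\Pr[rx \in H] = |H|/|\mathbb{F}_p^\times| = d/(p-1)$, regardless of whether $x$ itself lies in $H$.

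Next I would spell out the per-thread procedure. Sample $r$ uniformly from $\{1,\dots,p-1\}$, form $R := rQ = (rx)P$ in $G$ (one scalar multiplication), and invoke the algorithm of Theorem 1 on the instance $(P, R, H)$. By Theorem 1 this costs $O(\sqrt{d})$ exponentiations in $G$ and either certifies that $rx \notin H$ or returns the integer $y \equiv rx \pmod p$; in the latter case the secret is recovered as $x \equiv r^{-1} y \pmod p$ using a single modular inversion in $\mathbb{F}_p^\times$. Note there are no false positives, because the equality test inside Theorem 1 is exact by Lemma 1.

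Finally I would package the probability analysis. Running $m$ copies of the above procedure in parallel, with independent random $r_1,\dots,r_m$, matches the Collision Theorem urn model with $N = p-1$ and $n = d$: the ``red'' balls are exactly the $r$ for which $rx \in H$, and there are $d$ of them by the bijection argument above. The theorem gives a success probability of at least $1 - (1 - d/(p-1))^m \geq 1 - e^{-dm/(p-1)}$, and on success the winning thread outputs $x$. Since every thread performs $O(\sqrt{d})$ group operations, the claimed complexity follows. The only substantive point is the uniformity/independence argument in the first paragraph; the rest is a direct appeal to Theorem 1 and the Collision Theorem, with no further estimates needed.
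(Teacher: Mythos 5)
Your proposal is correct and follows essentially the same route as the paper: randomize the instance by multiplying $Q$ by a uniform $r \in \mathbb{F}_p^\times$ on each of $m$ threads, run the Theorem 1 algorithm on $rQ = (rx)P$, recover $x = r^{-1}(rx) \bmod p$ on a successful thread, and apply the Collision Theorem with $N = p-1$, $n = d$. If anything, your version is slightly more careful than the paper's, since you justify the uniformity of $rx$ via the bijection $r \mapsto rx$ rather than merely asserting it.
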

\begin{proof}
  The main idea is to run the algorithm in Theorem 1 on each of $m$ threads as follows. We randomly selects $m$ elements 
$y_1, y_2, .., y_m$ in $\mathbb{F}_p^\times$ and compute corresponding $m$ elements $Q_1=y_1Q=(y_1x)P$,...,$Q_m=y_mQ=(y_mx)P$ 
of $G$. Now, we run the above algorithm on each of $m$ parallel threads, with element $Q_i=(y_ix)P$ 
running on $i^{th}$ thread. Let $z_i= y_i x \pmod p$ for $i =1,..,m$. If $z_i\in H$ for some $i$, $1\leq i \leq m$; 
then the algorithm on that thread returns $z_i$. Once we have $z_i$ for some $i$, we compute $z_i\cdot {y_i}^{-1} \pmod p$ 
which is nothing but the discrete logarithm $x$.

The collision theorem above tells us about the probability of at least one $z_i$ belonging to $H$ for $1\leq i\leq m$.
In present case, $\mathbb{F}_p^\times$ with $p-1$ elements is the urn, so $N =p-1$. The elements of $H$ are red balls,
so $n=d$. Since we are randomly selecting $m$ elements $y_1,..,y_m$ from $\mathbb{F}_p^\times$, it implies that $z_1, z_2,..,
z_m$ also are random elements of $\mathbb{F}_p^\times$. Therefore, probability that at least one of $z_i$ would belong to 
$H$ is at least $1-e^{\left(\frac{-dm}{p-1}\right)}$, by the collision theorem. In other words, with probability at least 
$1-e^{-\frac{dm}{p-1}}$, one can compute $z_i$ for some $i$, $1\leq i \leq m$ if one has access to $m$ threads. Since the
number of steps performed on each thread before $z_i$ is computed for some $i$ is at max $2\sqrt{d}$, we conclude that
it takes $O(\sqrt{d})$ steps to compute $x$ with the probability at least $1-e^{\left(\frac{-dm}{p-1}\right)}$ if $m$ threads
are available. This completes the proof.
 \end{proof}
\begin{remark}
It follows from Theorem 2 that if there exist divisors $d$ of $p-1$ of suitable sizes, then DLP can be solved in time much less
than the square root of the group size but with a probability which increases with the number of threads used. A practical
importance of Theorem 2 lies in the fact that such divisors of $p-1$ do exist for all NIST curves~\cite{nist} as well as most of
SEC2 curves~\cite{secg}. This gives us precise estimates about the number of group operations and threads needed to solve DLP
with a given probability. We illustrate this by an example in the next section. 
\end{remark}
\begin{remark}
Note that the probability of solving the DLP in above theorem is proportional to the product $m \cdot d$. It follows that if
we fix a probability, this product is constant. Therefore, for a fixed probability of solving the DLP,
there is a trade-off between the number of steps and number of threads needed in Theorem 2. Increasing one of the two would
decrease the other and vice-a-versa. 
\end{remark}

\section{Security analysis of NIST curve P-256}
As discussed earlier, our probabilistic algorithm is applicable to NIST curves. In this section, we will demonstrate 
the implication of our algorithm on NIST curves. We will do that only on the NIST curve P-256 but similar conclusions hold 
for other four NIST curves over prime field as well, see appendix.\\
The NIST curve P-256 is defined over the prime field $\mathbb{F}_q$ and the order of P-256 is a prime $p$ given below.\\
\newline

$q=1157920892103562487626974469494075735300861434152903141955$\\
$33631308867097853951$\\
$p=115792089210356248762697446949407573529996955224135760342422$\\
$259061068512044369 $\\
$p-1 =2^4\cdot3\cdot71\cdot131\cdot373\cdot3407\cdot17449\cdot 38189 \cdot 187019741\cdot622491383\cdot$\\
$1002328039319\cdot 2624747550333869278416773953$\\
\newline
Since $p-1$ factors into many relatively small integers, we have the following divisors of $p-1$ of various sizes.\\

$d_1=534427449503294145963994143640970973102047412378826412971$\\
$9829 \approx 2^{201.73}$. \\

$d_2=106885489900658829192798828728194194620409482475765282594$\\
$39658\approx 2^{202.73}$. \\

$d_3=160328234850988243789198243092291291930614223713647923891$\\
$59487\approx 2^{203.32}$. \\

$d_4 =18207943204577231552993280473847881053586755339746615$\\$
889955457403\approx2^{213.47}$.\\

$d_5=238524055979961733344211974207407241801986494950680668158$\\
$4164919793\approx2^{220.50}$

For above sizes of subgroups and various number of threads $m$, the following tables give the probability to solve DLP. The 
second column of the Table 1 shows the probabilities when the subgroup size is $d_1 \approx 2^{201.73}$ bits. For example, if 
we have $m= 2^{54}$ parallel threads, then our algorithm would solve DLP in $2^{101.86}$ steps with probability $0.56458$ which
is the intersection of the fifth row(corresponding to $m=2^{54}$) and the second column(corresponding to $d_1 \approx 
2^{201.73}$). Other entries(probabilities) of the tables can be understood similarly. 

If we go across a row in the tables, we see the probabilities getting increased with the size of subgroup $d$. If we move 
along a column, probabilities increase with the number ($m$) of parallel threads. Table 1 also exhibits the trade-off 
between $d$ and $m$ for equal probability. For equal probability, highlighted diagonally in
the second and third column, we see that increasing the subgroup size by $1$-bit($d_1$ and $d_2$ differ by $1$-bit)
results in a decrease of $1$-bit in the number of parallel threads $m$. As an 
example, to achieve the probability $0.56458$, the subgroup of order $d_1$ requires $2^{54}$ parallel threads while 
the subgroup of order $d_2$ requires $2^{53}$.
\begin{center}
\begin{table}
\caption{}
\begin{tabular}{ |c| c| c|c|c|}
 \hline                  & $\log_2{d_1}=201.73$ & $\log_2 {d_2}=202.73$ & $\log_2 {d_3} = 203.32$ \\
                         & $\log_2 (\sqrt{d_1})=101.86$   & $\log_2(\sqrt{d_2})=101.36$    & $\log_2(\sqrt{d_3})=101.66$\\
 \hline  $\log_2 m =45$   & $0.00162$   & $0.00324$      &  $0.00486$ \\ 
 \hline  $\log_2 m =50$   &$0.05064$    & $0.098711$     & $0.14435$  \\ 
 \hline  $\log_2 m =52$   &$0.18768$    &  \textbf{0.34013}     &    $0.46398$\\ 
 \hline  $log_2 m =53$   &\textbf{0.34013}    & \textbf{0.56458}       &  $0.71268$ \\ 
 \hline  $\log_2 m =54$   &\textbf{0.56458}    & \textbf{0.81040}      &  $0.91745$ \\ 
 \hline  $log_2 m =55$   &\textbf{0.81040}    & \textbf{0.96405}     & $0.993184$ \\ 
 \hline  $log_2 m =56$   & \textbf{0.96405}   & $0.99871$     &$0.99995$ \\ 
 \hline
\end{tabular}
\end{table}
\end{center}
\begin{table}
\centering
\begin{minipage}{0.48\textwidth}
\centering
\begin{tabular}{ |c| c|}
 \hline    & log$_2 {d_4}=213.47$ \\
           & log$_2 (\sqrt{d_4})=106.78$\\
\hline log$_2 m =41$ & $0.29234$\\
\hline log$_2 m =42$ & $0.49921$ \\
\hline log$_2 m =43$ & $0.74921$ \\
\hline log$_2 m =44$ & $0.93710$\\
\hline
\end{tabular}
\caption{}
\end{minipage}
\hfill
\begin{minipage}{0.48\textwidth}
\centering
\begin{tabular}{ |c| c|}
 \hline    & log$_2 {d_5}=220.50$ \\
           & log$_2 (\sqrt{d_5})=110.25$\\
\hline log$_2 m =33$ & $0.16218 $\\
\hline log$_2 m =34$ & $0.29805$ \\
\hline log$_2 m =35$ & $0.50727$ \\
\hline log$_2 m =36$ & $0.75721$\\
\hline log$_2 m =37$ & $0.94106$\\
\hline
\end{tabular}
\caption{}
\end{minipage}
\end{table}

From Table 3, we can see that DLP on the curve P-256 can be solved in $2^{110.25}$(with a significant reduction from 
$2^{128}$) steps with probability greater than $0.5$, while using $2^{35}$ parallel threads. This indicates a weakness
of NIST curve P-256 if one assumes that $2^{35}$ parallel threads are within the reach of modern distributed 
computing. Similar conclusions can be drawn for other NIST curves P-192, P-224, P-384 and P-521 see appendix.

Moreover, one observes that for most of the curves in SEC2(Version 2)~\cite{secg} which also include all other ten NIST 
curves~\cite{nist}over binary field, $p-1$ factors into small divisors. Therefore, 
our algorithm for solving DLP on those curves in SEC2~\cite{secg} can similarly be studied.
\section{Conclusion}
In this paper we presented a novel idea of using the implicit representation with $\mathbb{F}_p^\times$ as
auxiliary group to solve the discrete logarithm problem in a group $\mathbb{G}$ of prime 
order $p$. We modified the most common generic algorithm, the baby-step giant-step algorithm for this purpose and studied it
further for NIST curves over prime fields. This algorithm
that we developed brings to the spotlight the structure of the auxiliary group for the security of the discrete logarithm
problem in $G$. This aspect is probably reported for the first time. %and like all new things, we hope that this might lead
%to some useful efficient attacks on the discrete logarithm problem.
\bibliography{Nist}

\begin{thebibliography}{1}

\bibitem{galbraith2014summation}
Steven~D Galbraith and Shishay~W Gebregiyorgis.
\newblock Summation polynomial algorithms for elliptic curves in characteristic
  two.
\newblock In {\em International Conference in Cryptology in India}, pages
  409--427. Springer, 2014.

\bibitem{gallant2000improving}
Robert Gallant, Robert Lambert, and Scott Vanstone.
\newblock Improving the parallelized pollard lambda search on anomalous binary
  curves.
\newblock {\em Mathematics of Computation}, 69(232):1699--1705, 2000.

\bibitem{sil}
Jeffrey Hoffstein, Jill Pipher, Joseph~H Silverman, and Joseph~H Silverman.
\newblock {\em An Introduction to Mathematical Cryptography}.
\newblock Springer, 2008.

\bibitem{koblitz2015riddle}
Neal Koblitz and Alfred Menezes.
\newblock A riddle wrapped in an enigma.
\newblock {\em IACR Cryptology ePrint Archive}, 2015:1018, 2015.

\bibitem{maurer1999relationship}
Ueli~M Maurer and Stefan Wolf.
\newblock The relationship between breaking the {D}iffie--{H}ellman protocol
  and computing discrete logarithms.
\newblock {\em SIAM Journal on Computing}, 28(5):1689--1721, 1999.

\bibitem{nist}
FIPS NIST.
\newblock 186.2 {D}igital {S}ignature {S}tandard ({DSS}).
\newblock {\em National Institute of Standards and Technology (NIST)}, 2000.

\bibitem{secg}
SECG. SEC 2(Version~2).
\newblock : Recommended {E}lliptic {C}urve {D}omain {P}arameters.
\newblock {\em See http://www. secg.org/, 2010}.

\bibitem{semaev2004summation}
Igor Semaev.
\newblock Summation polynomials and the discrete logarithm problem on elliptic
  curves.
\newblock {\em IACR Cryptology ePrint Archive}, 2004:31, 2004.

\bibitem{wiener1998faster}
Michael~J Wiener and Robert~J Zuccherato.
\newblock Faster attacks on elliptic curve cryptosystems.
\newblock In {\em International Workshop on Selected Areas in Cryptography},
  pages 190--200. Springer, 1998.

\end{thebibliography}
\nocite{*}
\begin{appendices}
\appendix
\section{NIST Curves Over Prime Field}
For each of these five NIST curves of order prime $p$, two subgroups of $\mathbb{F}_p^\times$ with (large enough)orders $d_1$,
$d_2$ are given such that $d_1 \cdot d_2 = p-1$ and $\gcd(d_1, d_2) = 1$, see Remark 1.
\subsection{P-192}
$p= 6277101735386680763835789423176059013767194773182842284081$\\
\newline 
$p-1 = 2^4 \cdot 5 \cdot 2389 \cdot 9564682313913860059195669 \cdot 3433859179316188$\\
$682119986911 $\\
$d_1 = 656279166350909980926771898430320 \approx 2^{109.02}$\\
$d_2= 9564682313913860059195669 \approx 2^{82.98}$

\subsection{P-224}
$p= 269599466671506397946670150870196259404578077144243917216827$\\
$22368061 $\\
\newline
$p-1= 2^2 \cdot  3^6 \cdot 5 \cdot 2153 \cdot 5052060625887581870$\\
$7470860153287666700917696099933389351507 $\\
$d_1 = 50520606258875818707470860153287666700917696099933389351507 \approx 2^{195.01}$\\
$d_2= 533642580 \approx 2^{28.99}$

\subsection{P-256}
$p = 115792089210356248762697446949407573529996955224135760342422$\\
$259061068512044369 $\\
\newline 
$p-1 =2^4\cdot3\cdot71\cdot131\cdot373\cdot3407\cdot17449\cdot 38189 \cdot 187019741\cdot622491383\cdot$\\
$1002328039319\cdot 2624747550333869278416773953$\\
\newline
$d_1= 1489153224408067225170753316415649493584 \approx 2^{130.13}$\\
$d_2= 77757001302792844776776389119582520177  \approx 2^{125.87}$

\subsection{P-384}
$p = 3940200619639447921227904010014361380507973927046544666794$ \\
$6905279627659399113263569398956308152294913554433653942643$ \\
\newline
$p-1 = 2 \cdot 3^2 \cdot 7^2 \cdot 13 \cdot 1124679999981664229965379347 \cdot$\\
$3055465788140352002733946906144561090641249606160407884365391979704929$\\
$268480326390471$\\
\newline
$d_1= 1167799024227242535444914507528451248843085599474507893404452814$\\
$6432239664131807464380162 \approx 2^{292.55}$\\
$d_2= 1124679999981664229965379347 \approx 2^{89.86} $

\subsection{P-521}
$p = 686479766013060971498190079908139321726943530014330540939 4463$\\
$45918554318339765539424505774633321719753296399637136332111 386476$ \\
$8612440380340372808892707005449$\\
\newline
$p-1 = 2^3 \cdot 7 \cdot 11 \cdot 1283 \cdot 1458105463 \cdot 1647781915921980690468599\cdot$\\
$3615194794881930010216942559103847593050265703173292383701371712350878926821$\\
$661243755933835426896058418509759880171943 $\\
\newline
$d_1 = 4166083869350854498586791068944823620942931357552596820305098954973$\\
$694271292315253349654329419600683157636543108630210814256821981752 \approx 2^{440.55}$\\
$d_2= 1647781915921980690468599 \approx 2^{80.45} $
\end{appendices}
\end{document}